\documentclass[11pt,a4paper]{amsart}

\usepackage[a4paper,margin=1.5cm]{geometry}
\usepackage{amsmath,amssymb,amsthm}
\usepackage{amscd}
\usepackage{graphicx}
\usepackage{booktabs}
\usepackage{array}
\usepackage{longtable}
\usepackage{float}
\usepackage{placeins}
\usepackage{cite}
\usepackage{pdflscape}
\usepackage{adjustbox}
\usepackage{xcolor}
\usepackage{chngcntr}
\usepackage{tikz}
\usepackage{placeins}
\usetikzlibrary{arrows.meta,positioning}

\counterwithout{figure}{section}
\counterwithout{table}{section}

\newtheorem{theorem}{Theorem}
\newtheorem{proposition}[theorem]{Proposition}

\title[Evolutionary Entropy and the Greenland Shark]
{Evolutionary Entropy Shapes Lifespan of the Greenland Shark}

\author[J. Buescu]{Jorge Buescu}
\address{Departamento de Matemática, Faculdade de Ciências,
Universidade de Lisboa, Portugal;
Centro de Estudos Matemáticos, Universidade de Lisboa, Portugal}
\email{jsbuescu@ciencias.ulisboa.pt}

\author[S. Elaydi]{Saber N. Elaydi }
\address{Department of Mathematics, Trinity University,
San Antonio, Texas, USA; Fellow of the Center for Mathematical Analysis,
Geometry and Dynamical Systems (CAMGSD),
University of Lisbon, Portugal}
\email{selaydi@trinity.edu}

\author[H. M. Oliveira]{Henrique M. Oliveira$^*$}
\address{Departamento de Matemática, Instituto Superior Técnico,
Universidade de Lisboa, Portugal;
Centro de Análise Matemática, Geometria e Sistemas Dinâmicos,
Universidade de Lisboa, Portugal}
\email{henrique.m.oliveira@tecnico.ulisboa.pt}
\thanks{$^*$Corresponding author.}

\subjclass[2020]{Primary 92B05; Secondary 92D25, 92D15, 37N25}

\keywords{evolutionary entropy,
Greenland shark,
age-structured populations,
Leslie matrices,
reproductive lifespan,
reproductive persistence,
generation time,
life-history theory,
open-group populations,
extreme longevity}

\date{\today}

\begin{document}

\begin{abstract}
The Greenland shark (\emph{Somniosus microcephalus}) is among the
longest-lived vertebrates known, with female maturity estimated at
about 156 years and maximum ages approaching four centuries. We use
the entropic theory of life histories and an open-group Leslie model
to determine which reproductive windows are compatible with this
extreme demographic profile. The model yields generation times,
reproductive quantiles and entropy-maximizing reproductive endpoints.
A critical value of reproductive persistence, \(q_c\approx0.9763\),
separates two regimes: below it, evolutionary entropy increases
towards an asymptote; above it, entropy is maximized at a finite
reproductive endpoint. Longevity-calibrated scenarios lie on both
sides of this threshold and imply generation times near or above two
centuries, with reproductive contributions extending into the third,
fourth and fifth centuries of life. As an independent benchmark, an empirically parametrized Leslie model
for the bowhead whale yields comparable entropy-based demographic
quantities while allowing the reproductive and survivorship parameters
to be calculated directly. The method provides quantitative
constraints on reproductive organisation in exceptionally long-lived
species for which detailed demographic data are unavailable.
\end{abstract}

\maketitle

\section{Introduction}

The Greenland shark, \emph{Somniosus microcephalus}, is among the
longest-lived vertebrates currently known. 
Radiocarbon dating of
eye-lens nuclei suggests ages exceeding three centuries, with the
oldest individual estimated at $392 \pm 120$ years, while
female sexual maturity has been estimated at
$156 \pm 22$ years \cite{Nielsen2016GreenlandShark}.
%
These observations raise a natural demographic question: what reproductive window is compatible with such an extreme delay in maturity?

We address this question within the entropic theory of life histories
introduced by Demetrius
\cite{Demetrius1974,Demetrius1977}.

\begingroup
\color{black}
Classical life-history theory is an evolutionary theory of age-specific
survivorship and fecundity. In age-structured populations, the asymptotic
population growth rate provides a demographic measure of Darwinian fitness,
while the broader selection-theoretic background includes Fisher's
Fundamental Theorem of Natural Selection \cite{Fisher1930}. This classical
tradition was developed into a systematic theory of life histories by, among
others, Charlesworth, Stearns, Roff, and Charnov
\cite{Charlesworth,Stearns1992,Roff1992,Charnov1993}. It describes how
natural selection organizes demographic schedules and the associated
trade-offs involving age at maturity, age-specific survival and fecundity,
reproductive output, and lifespan.

The entropic theory of life histories is a
bioenergetic and statistical theory of the evolution of age-specific
survivorship and fecundity
\cite{Demetrius1974,Demetrius1977,Demetrius97,Demetrius2000}. In its
bioenergetic interpretation, the population growth rate measures the net rate
at which the chemical energy supplied by environmental resources is converted
into biological work, and admits the decomposition
\begin{equation}
\label{eq:entropy-decomposition}
r=H+\Phi.
\end{equation}
Here, in discrete time, $r=\log\lambda$, where $\lambda$ is the Perron \textcolor{black}{eigenvalue} of the matrix modelling the dynamics. The function $H$, called
evolutionary entropy, describes the rate associated with the conversion of
metabolic energy into biological work over the life cycle, whereas $\Phi$,
called reproductive potential, represents the complementary contribution of
the resource environment to demographic growth
\cite{Demetrius97,Demetrius2000,Demetrius09,
DemetriusLegendreHarremoes2009,Demetrius2023SelfOrganization}.

The quantities $H$ and $\Phi$ also specify statistical properties of the life
history. Evolutionary entropy is
\[
H=\frac{S}{T},
\]
where $S$ is the entropy of the normalized distribution of reproductive
contributions among the age or stage classes and $T$ is generation time.
Thus, $S=0$ when reproductive contribution is concentrated in a single class,
as in the idealized semelparous case, whereas $S>0$ when reproduction is
distributed over several classes, as in iteroparous life histories
\cite{Demetrius1974,Demetrius1977,Demetrius97,Demetrius2000}.

The reproductive potential $\Phi$ orders life histories along the fast--slow
continuum. Negative values, $\Phi<0$, are associated with the slow end of the
continuum---late sexual maturity, long lifespan, and low reproductive output
or small litter size---whereas positive values, $\Phi>0$, are associated with
the fast end---early sexual maturity, short lifespan, and high reproductive
output or large litter size
\cite{Demetrius97,Demetrius2000,Demetrius09,
Demetrius2023SelfOrganization}.

The entropic selection principle relates the direction of selection on $H$ to
the external environment, represented by the resource endowment: stable
resource conditions favour increasing evolutionary entropy, whereas unstable
resource conditions favour decreasing evolutionary entropy. Evolutionary entropy also has a precise dynamical interpretation:
after transforming the Leslie model into an equivalent Markov chain,
$H$ determines the asymptotic rate of convergence towards the stable
age distribution
\cite{Tuljapurkar1982}. Larger values of $H$ are also associated
with greater demographic stability against demographic fluctuations
\cite{Demetrius04,DemetriusGundlach2014}.

The pair $(H,\Phi)$
therefore links the statistical organization of a life history to the resource
environment and provides quantitative descriptors of its temporal and
functional organization. The Greenland shark offers an extreme application of
this theory: its exceptionally delayed maturity, extended lifespan, and low
reproductive output place it near the slow end of the life-history continuum
\cite{Nielsen2016GreenlandShark,Gravel2024}.
\endgroup

The analysis relies on two results established in a recent paper
\cite{BuescuElaydiOliveira2026}. The \textcolor{black}{Homogeneity} Theorem 2.4 states
that evolutionary entropy, generation time and reproductive quantiles
depend only on the normalized reproductive distribution. In the same
paper it is shown that there exist two qualitatively different regimes
for evolutionary entropy in the reproductive window: either the entropy
always increases to a horizontal asymptote, or it has a unique interior
maximum.

The \textcolor{black}{Critical-Threshold} Theorem 4.14 states that these regimes, in
open-group Leslie models, are separated by the unique solution of the
criticality equation
\begin{equation}
\label{eq:critical}
q^A+q-1=0,
\end{equation}
where \(A\) is the age at first reproduction and \(q=\rho/\lambda\),
with \(\rho\) denoting the asymptotic survival parameter of the open
class. In Section~\ref{sec:sensitivity} we show that, within the
geometric open-group family, equation~\eqref{eq:critical} is exactly
the stationarity condition \(dH/dq=0\). By Theorem 4.10 of
\cite{BuescuElaydiOliveira2026}, this stationary point is the unique
maximizer of \(H(q)\). The critical threshold therefore has a direct
variational interpretation.


In this paper we consider open-group populations with reproductive persistence
close to the critical threshold, including both subcritical and supercritical
scenarios. For each scenario we compute the generation time, reproductive quantiles
and, when it exists, the entropy-maximizing reproductive endpoint. 
We also derive the asymptotic divergence of the optimal reproductive
window as the persistence parameter approaches the critical threshold
from above.

For comparative context, in Section~5 we apply the same construction
to five other exceptionally long-lived animal species, showing that
the Greenland shark occupies a distinguished demographic position even
within this extreme-longevity group. For the bowhead whale,
\textit{Balaena mysticetus}, an independently parametrized Leslie
model additionally provides an empirical benchmark against which the
reduced open-group construction can be compared. Our goal is not to
reconstruct a demographic matrix for \textit{S.~microcephalus}, but
to determine which reproductive windows are compatible with a maturity
age of 156 years.

Throughout the analysis, ages and projection intervals are expressed
in years.

\section{Open-Group Representation}\label{Open-Group Representation}

The demographic scenarios considered below may be represented by an
open-group Leslie matrix \cite{ElaydiCushing2024}
\[
L_q=
\begin{pmatrix}
0 & \cdots & 0 & f & f\\
s_1 & \cdots & 0 & 0 & 0\\
\vdots & \ddots & \vdots & \vdots & \vdots\\
0 & \cdots & s_{A-1} & 0 & 0\\
0 & \cdots & 0 & s_A & \rho
\end{pmatrix},
\]
\textcolor{black}{where $A$ denotes the age at first reproduction. Here, ``open group''
refers to a terminal aggregate age class. In a strictly age-truncated Leslie matrix, individuals
surviving beyond the last represented age are no longer followed.
In $L_q$, by contrast, all ages beyond $A$ are collected in the
terminal class $A^{+}$. The coefficient $s_A$ transfers individuals
into this class, while the lower-right entry $\rho$ represents their
survival within it from one projection interval to the next. The
self-loop of weight $\rho$ in Figure~\ref{fig:open-group-graph}
therefore permits an indefinitely long post-maturity lifespan without
imposing an arbitrary maximum age.}
\begin{figure}[H]
\centering
\resizebox{0.50\textwidth}{!}{%
\begin{tikzpicture}[
  state/.style={
    circle,
    draw,
    minimum size=8mm,
    inner sep=1pt
  },
  transition/.style={
    -{Latex[length=2mm,width=1.4mm]},
    thick
  },
  node distance=1.45cm
]

\node[state] (x1) {$1$};
\node[state, right=of x1] (x2) {$2$};
\node[draw=none, right=1.10cm of x2] (dots) {$\cdots$};
\node[state, right=1.10cm of dots] (xAm1) {$A-1$};
\node[state, right=of xAm1] (xA) {$A$};
\node[state, right=of xA] (xOpen) {$A^{+}$};

\draw[transition]
  (x1) -- node[above] {$s_1$} (x2);

\draw[transition]
  (x2) -- node[above] {$s_2$} (dots);

\draw[transition]
  (dots) -- node[above] {$s_{A-2}$} (xAm1);

\draw[transition]
  (xAm1) -- node[above] {$s_{A-1}$} (xA);

\draw[transition]
  (xA) -- node[above] {$s_A$} (xOpen);

\draw[transition]
  (xA.north)
  to[out=120,in=60,looseness=1.15]
  node[above] {$f$}
  (x1.north);

\draw[transition]
  (xOpen.south)
  to[out=240,in=300,looseness=1.10]
  node[below] {$f$}
  (x1.south);

\draw[transition]
  (xOpen)
  edge[loop right]
  node[right] {$\rho$}
  (xOpen);

\end{tikzpicture}
}

\caption{Life-cycle graph associated with the open-group Leslie
matrix $L_q$. The class $A^{+}$ is the terminal open class containing
individuals older than age $A$.}
\label{fig:open-group-graph}
\end{figure}
As established in \cite{BuescuElaydiOliveira2026}, reproductive contributions
follow a geometric tail of ratio $
q=\rho/\lambda,$ where
$\lambda$ is the dominant eigenvalue of the projection matrix. Under constant post-maturity fertility, the normalized reproductive
contributions form a geometric distribution with ratio \(q=\rho/\lambda\)
and normalization factor \(1-q\).

\textcolor{black}{For this geometric family, we use throughout the formulae derived in
\cite{BuescuElaydiOliveira2026}.} In particular, the normalized
reproductive-age distribution is
\[
p_j=(1-q)q^{j-A},
\qquad j\geq A.
\]
\textcolor{black}{The quantity \(S\) and the generation time
\(T\) are, respectively,}
\[
\begin{aligned}
S(q)
&:=-\sum_{j=A}^{\infty}p_j\log p_j
 =-\log(1-q)-\frac{q}{1-q}\log q,
\\
T(q)
&:=\sum_{j=A}^{\infty}j\,p_j
 =A+\frac{q}{1-q}.
\end{aligned}
\]
\textcolor{black}{The entropy \(S\) has the same mathematical form as Shannon entropy and
Gibbs--Boltzmann entropy, but the underlying probabilities and the
interpretations are different. Here, \(p_j\) is the normalized contribution
of age or stage class \(j\) to reproduction, so \(S\) measures the statistical
dispersion of reproductive contributions across the life cycle. Evolutionary
entropy is the corresponding rate \(H=S/T\), obtained by normalizing $S$ by generation time. It is therefore a demographic quantity,
not the thermodynamic entropy of the organism, although the common
logarithmic form permits the bioenergetic interpretation developed in
\cite{Demetrius97,Demetrius2000,Demetrius2023SelfOrganization}.}
Hence the evolutionary {entropy} of the geometric open-group family is
\begin{equation}
\label{eq:Hq}
H(q)=\frac{S(q)}{T(q)}
=
\frac{-\log(1-q)-\dfrac{q}{1-q}\log q}
     {A+\dfrac{q}{1-q}}.
\end{equation}
\textcolor{black}{ Combining equation~\eqref{eq:Hq} with the entropy/potential decomposition
\eqref{eq:entropy-decomposition}, we obtain explicitly the reproductive
potential \(\Phi=\log\lambda-H(q)\).
Thus \(H\) is determined by \(A\) and \(q\), whereas the reproductive
potential \(\Phi\) additionally requires the population growth factor
\(\lambda\).} Since
\[
q=\frac{\rho}{\lambda},
\]
knowledge of \(q\) alone does not identify \(\lambda\) unless the
annual persistence parameter \(\rho\) is independently known.

For comparative context, Gravel et al.~\cite{Gravel2024} report a
maximum intrinsic rate of population increase
\(r_{\max}\simeq 0.04\,{\rm yr}^{-1}\) for the Greenland shark, placing
the species at the extreme slow end of the fast--slow life-history
continuum. Related demographic formulations for chondrichthyans have
been developed by Pardo et al.~\cite{Pardo2016}, with explicit account
of survival to maturity. These estimates provide valuable independent
life-history context, although \(r_{\max}\) need not be identified
directly with the Perron growth rate \(\log\lambda\) of the reduced
open-group model considered here.

Indeed, since \(0<\rho\leq 1\),
\[
q\lambda=\rho\leq 1,
\]
and hence
\begin{equation}
\label{eq:growth-bound}
\log\lambda\leq -\log q.
\end{equation}
For the central longevity-calibrated value \(q=0.987413\), this gives
\[
\log\lambda\leq 0.01267\,{\rm yr}^{-1}.
\]
\textcolor{black}{Thus \(H(q)\), but neither \(\lambda\) nor \(\Phi\), can be calculated
from \(A\) and \(q\) alone. Moreover,
\begin{equation}\label{eq:super}
\Phi=\log\rho+T(q)^{-1}\log\left(\frac{1-q}{q^A}\right),
\end{equation}
so \(q>q_c\) implies \(\Phi<0\) for every \(0<\rho\leq1\), whereas for
\(q<q_c\) the sign of \(\Phi\) depends on \(\rho\).}

The open terminal class and the assumption of constant post-maturity
fertility play distinct roles. Empirically, open-group structure is
common: as reported in \cite{BuescuElaydiOliveira2026}, more than
70\% of the Leslie matrices examined from the COMADRE database
\cite{SalgueroGomez2016COMADRE} are open-group matrices. By contrast,
constant post-maturity fertility is a deliberate analytical restriction.
It isolates the effect of reproductive persistence and yields the
geometric kernel studied here. Age-dependent fertility would lead to a
weighted reproductive tail and lies beyond the reduced analytical model
considered in this note.


\textcolor{black}{Biologically, $q = \rho/\lambda$ measures the persistence of reproductive contributions relative to population growth. 
Values near 1 indicate high persistence of reproductive contributions
relative to population growth and, when population growth is close to
stationarity, correspond to high adult survival and slow turnover. This pattern is expected under low extrinsic mortality, cold stable
environments, and depressed metabolism. The Greenland shark,
inhabiting near-freezing Arctic waters and occupying a high trophic
position, appears to face little documented natural predation as an
adult \cite{MacNeilEtAl2012,GrantEtAl2018}, making it a natural
candidate for such near-unity persistence.}

\section{Entropy-Based Demographic Scenarios}

Open-group Leslie models provide a natural representation for species
characterized by extremely long reproductive tails
\cite{ElaydiCushing2024}.
Adopting for the Greenland shark the 
estimate $A=156$ years,  
we obtain 
 from the criticality equation \eqref{eq:critical} the value  
 $
q_c\approx 0.9763$ for the critical threshold. 

To investigate demographic organizations compatible with the observed life
history of \textit{Somniosus microcephalus}, we calibrate three values of \(q\) by requiring the \(95\%\) reproductive
quantile \(B_{95}\) to match the lower, central, and upper longevity
estimates of \(272\), \(392\), and \(512\) years, respectively. 
The 95\% quantile is used for calibration because it provides a stable
estimate of the reproductive endpoint while being only weakly affected
by the long geometric tail, which is determined by the model rather
than by the data.

For $0<p<1$, let $B_p$ denote the smallest age at which the
cumulative reproductive contribution reaches $p$. From the geometric
reproductive-age distribution,
\begin{equation}
B_p
=
A+\left\lceil
\frac{\log(1-p)}{\log q}
\right\rceil-1.
\label{eq:quantile-age}
\end{equation}
Consequently, for a prescribed target age $B>A$,
\begin{equation}
B_p=B
\quad\Longleftrightarrow\quad
(1-p)^{1/(B-A)}
<
q
\leq
(1-p)^{1/(B-A+1)}.
\label{eq:quantile-interval}
\end{equation}
Thus each discrete target age determines an interval of admissible
values of $q$, rather than a unique value. In the calculations below,
$q$ is taken as the arithmetic midpoint of the corresponding
interval. We obtain
\[
q=0.974613,\;0.987413,\;0.991632.
\]
The first scenario lies slightly below the critical threshold \(q_c\),
whereas the central and upper scenarios lie above it.
For each value we compute the generation time \(T\), the reproductive
quantiles \(B_{50}\), \(B_{75}\), \(B_{90}\), and \(B_{95}\), and, when it
exists, the entropy-maximizing endpoint \(D^*\).

\begin{table}[H]
\centering
\caption{Longevity-calibrated demographic scenarios and evolutionary
entropy sensitivity for the Greenland shark. The values of \(q\) are
the midpoints of the intervals for which the \(95\%\) reproductive
quantile \(B_{95}\) matches the lower, central, and upper longevity
estimates of \(272\), \(392\), and \(512\) years, respectively.}
\label{tab:calibrated-sensitivity}
\begin{tabular}{ccccccccccc}
\hline
\(q\)
& \(T\)
& \(B_{50}\)
& \(B_{75}\)
& \(B_{90}\)
& \(B_{95}\)
& \(D^{*}\)
& \(\Delta q=q-q_c\)
& \(H(q)\)
& \(dH/dq\)
& Regime
\\
\hline
0.974613
& 194.39
& 182
& 209
& 245
& 272
& --\(^{\dagger}\)
& -0.001684
& 0.023976
& 0.0139
& Below \(q_c\)
\\
0.987413
& 234.45
& 210
& 265
& 337
& 392
& 299
& 0.011116
& 0.022900
& -0.2755
& Above \(q_c\)
\\
0.991632
& 274.50
& 238
& 320
& 430
& 512
& 272
& 0.015335
& 0.021053
& -0.6581
& Above \(q_c\)
\\
\hline
\end{tabular}
\end{table}

{\footnotesize
$^\dagger$ 
For $q < q_c$ the  evolutionary entropy increases monotonically to its asymptotic value, admitting no finite maximizer.
}
\vspace{5pt}

The three calibrated scenarios produce markedly different reproductive
time scales. By construction, the \(95\%\) reproductive quantiles span the
full reported longevity interval from \(272\) to \(512\) years. The lower
longevity estimate corresponds to a value of \(q\) slightly below the
critical threshold, whereas the central and upper estimates correspond to
supercritical values. Thus, under the \(B_{95}\)-based calibration adopted here, the values of
\(q\) induced by the reported longevity interval straddle the critical
boundary separating asymptotic and finite entropy regimes.

To assess the sensitivity to the choice of calibration quantile, we fix the target age at \(392\) years and assign it successively
to \(B_{90}\), \(B_{95}\), and \(B_{99}\). The corresponding values of \(q\), obtained as the
midpoint of the interval in \eqref{eq:quantile-interval}, are shown
below.

\begin{table}[H]
\centering
\caption{Sensitivity to the quantile assigned to the target age
\(392\), with \(A=156\).}
\vspace{1mm}
\label{tab:quantile-sensitivity}
\begin{tabular}{cccc}
\hline
$p$ & $q$ & $q-q_c$ & $T$\\
\hline
0.90 & 0.990311 & 0.014014 & 258.21\\
0.95 & 0.987413 & 0.011116 & 234.45\\
0.99 & 0.980716 & 0.004419 & 206.86\\
\hline
\end{tabular}
\end{table}

The inferred generation time varies by approximately \(51\) years
between the \(B_{90}\) and \(B_{99}\) calibrations. In every case, however,
\(q>q_c\), so the classification remains supercritical.

\section{Variational Interpretation and Entropy Sensitivity}
\label{sec:sensitivity}

\textcolor{black}{Sensitivity analysis of evolutionary entropy in age-structured Leslie
populations was developed by Demetrius, Gundlach and Ziehe
\cite{Demetrius07}. Steiner and Tuljapurkar
\cite{SteinerTuljapurkar2020} subsequently derived exact sensitivities
of population entropy under constrained perturbations of the transition
probabilities of ergodic and absorbing Markov chains. More recently,
Oliveira \cite{Oliveira2026} developed explicit sensitivity
formulae for evolutionary entropy in irreducible Lefkovitch matrices,
including open-group Leslie matrices as a direct specialization.}

\textcolor{black}{In the geometric open-group family considered here, the demographic
variation is restricted to the single persistence parameter \(q\).
Evolutionary entropy is given explicitly by
equation~\eqref{eq:Hq}; differentiating \(H(q)\) and simplifying
therefore yields the sensitivity along this one-parameter family,}
\[
\frac{dH}{dq}
=
\frac{\log(1-q)-A\log q}
     {\bigl[A(1-q)+q\bigr]^2}.
\]
Consequently,
\begin{equation}
\label{eq:sensitivity}
\frac{dH}{dq}=0
\qquad\Longleftrightarrow\qquad
q^A+q-1=0.
\end{equation}
Equation~\eqref{eq:sensitivity} shows that the criticality equation
\eqref{eq:critical} is exactly the stationarity condition for
evolutionary entropy within the geometric open-group family. Since
Theorem 4.10 of \cite{BuescuElaydiOliveira2026} establishes that this
stationary point is the unique maximum, the critical persistence
parameter \(q_c\) is precisely the entropy-maximizing value of \(q\).
Thus the threshold separating the asymptotic and finite-window regimes
coincides with the variational maximum of evolutionary entropy.

For completeness, the corresponding sensitivities of $S$
and $T$ are
\[
\begin{aligned}
\frac{dS}{dq}&=-\frac{\log q}{(1-q)^2},
&
\frac{dT}{dq}&=\frac{1}{(1-q)^2}.
\end{aligned}
\]
Since $0<q<1$, both quantities increase monotonically with $q$.
Therefore the critical behaviour of the evolutionary
entropy is a property of the global quotient $H=S/T$, rather than of its numerator or denominator
taken separately.

Because $dH/dq$ vanishes at the unique entropy-maximizing value $q_c$,
the entropy is locally insensitive to sufficiently small perturbations
of $q$ near the critical threshold. At the critical point,
\[
\frac{d^2H}{dq^2}(q_c)\approx -9.246<0.
\]
As $q$ moves into the calibrated supercritical regime, $dH/dq$
becomes increasingly negative. Thus larger supercritical values of
$q$ produce increasingly steep decreases in entropy, as shown in Table~\ref{tab:calibrated-sensitivity} and Figure~\ref{fig:SHARK_PEAK}.

\FloatBarrier

\subsection{Sensitivity to Age at Maturity}

For $A > 0$, implicit differentiation of
\[
q_c(A)^A+q_c(A)-1=0
\]
gives
\begin{equation}
\frac{dq_c}{dA}
=
-\frac{q_c^A\log q_c}
       {Aq_c^{A-1}+1}
>0.
\label{eq:qc-derivative}
\end{equation}
Thus later maturity shifts the critical threshold towards \(1\). Setting
\(\varepsilon_A=1-q_c(A)\), the criticality equation becomes
\[
(1-\varepsilon_A)^A=\varepsilon_A,
\]
and, for large \(A\),
\begin{equation}
q_c(A)
\simeq
1-\frac{W(A)}{A},
\label{eq:qc-lambert}
\end{equation}
where \(W\) denotes the principal Lambert \(W\)-function. For \(A=156\),
this approximation gives \(q_c\simeq0.976072\),  in excellent agreement with the exact value \(0.976297\).

The reported female sexual-maturity estimate, used here as a proxy
for the age at first reproduction, is $A=156\pm22$ years
\cite{Nielsen2016GreenlandShark}.

We next examine
\textcolor{black}{the sensitivity of the entropy-derived quantities across this interval.}

Given $0 < p < 1$, we denote the reproductive quantile corresponding to $p$ by $B_p$. As shown in \cite{BuescuElaydiOliveira2026}, 
for fixed reproductive persistence $q$ the reproductive quantiles
satisfy 
\[
B_p
=
A+
\left\lceil
\frac{\log(1-p)}{\log q}
\right\rceil
-1 \ \stackrel{\rm def}{\equiv} \  A + C_p(q), 
\]
where $C_p(q)$ depends only on $q$. 
This means that, for fixed reproductive persistence $q$, a change in
the age at first reproduction translates every reproductive quantile
by exactly the same amount, while preserving the shape of the
reproductive window.

Table~\ref{tab:sensitivity} presents the critical threshold and several
entropy-derived quantities across the reported uncertainty interval, using the central longevity-calibrated value
\(q=0.987413\).

\begin{table}[H]
\centering
\caption{\textcolor{black}{Sensitivity of the entropy-derived quantities to the
reported uncertainty in age at maturity, \(A=156\pm22\) years, using the
central longevity-calibrated scenario \(q=0.987413\).}}
\label{tab:sensitivity}
\begin{tabular}{ccccccc}
\hline
$A$ & $q_c$ & $\Delta q$ & $T$ & $B_{90}$ & $B_{95}$ & $B_{99}$\\
\hline
134 & 0.973319 & 0.014094 & 212.447 & 315 & 370 & 497\\
145 & 0.974906 & 0.012507 & 223.447 & 326 & 381 & 508\\
156 & 0.976297 & 0.011116 & 234.447 & 337 & 392 & 519\\
167 & 0.977529 & 0.009884 & 245.447 & 348 & 403 & 530\\
178 & 0.978627 & 0.008786 & 256.447 & 359 & 414 & 541\\
\hline
\end{tabular}
\end{table}

The critical threshold remains extremely close to \(1\) throughout the
entire maturity range, while the distance
\(\Delta q=0.987413-q_c\) varies only slightly. Moreover, the reproductive
quantiles increase exactly in parallel with \(A\), in agreement with
the relation \(B_p=A+C_p(q)\). The results therefore indicate that
uncertainty in the age at first reproduction affects primarily the location
of the reproductive window rather than its overall shape.

\begin{figure}[ht]
	\centering
	\includegraphics[width=0.8\textwidth]{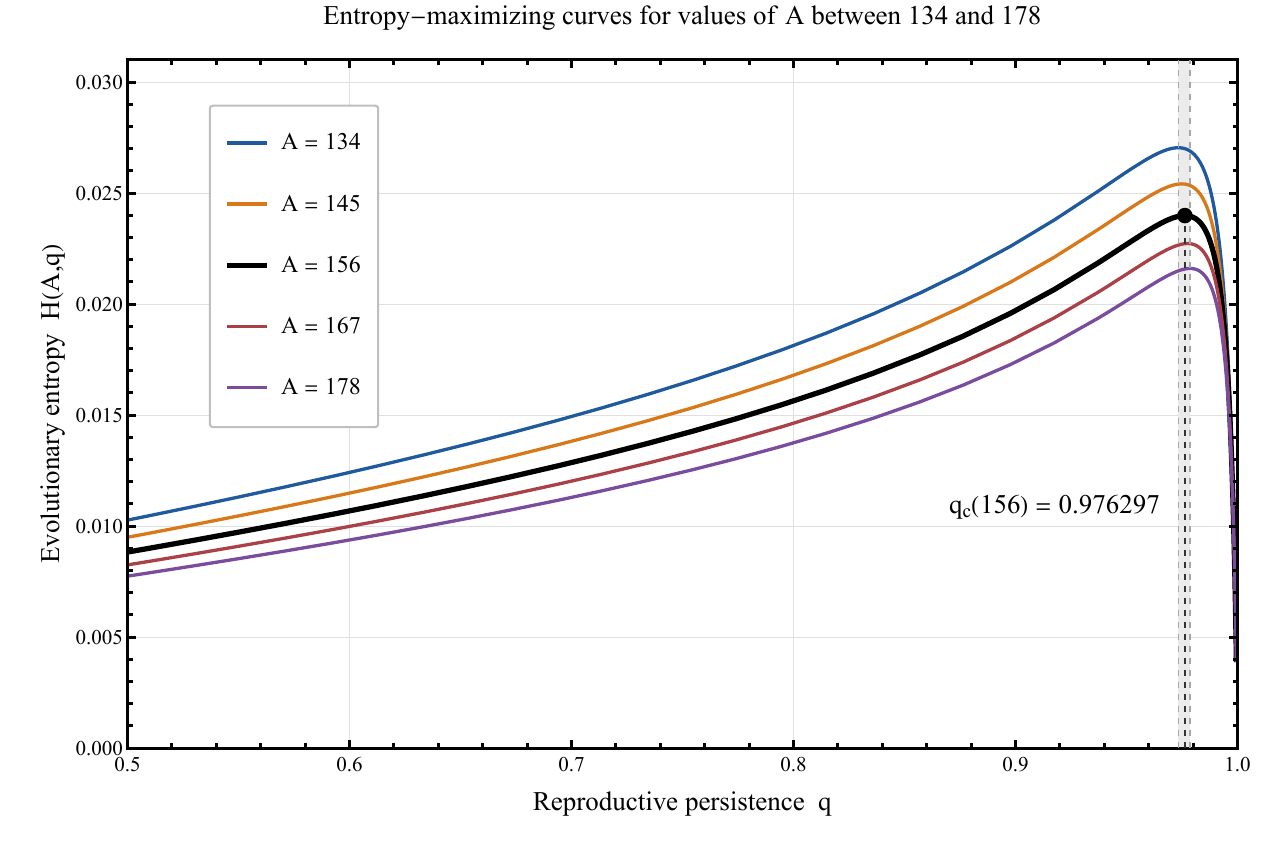}
	\caption{Evolutionary entropy profiles for different ages at first reproduction. The light-grey band marks the narrow interval spanned by the five entropy-maximizing thresholds \(q_c(A)\), while the dashed line identifies \(q_c(156)\).}\label{fig:SHARK_PEAK}
	\end{figure}

Figure~\ref{fig:SHARK_PEAK} illustrates \textcolor{black}{their sensitivity to variations
in the age of sexual maturity.} Here \(A\) varies across the reported
uncertainty interval \(156\pm22\) years, from \(134\) to \(178\) years.
As \(A\) increases, the peak of \(H(q)\) shifts rightward and the curve flattens,
pushing the entropy maximum closer to \(1\).
The Greenland shark thus sits in a particularly flat region of the entropy
landscape, so that evolutionary entropy itself is relatively insensitive to
small perturbations of \(q\) near the critical threshold, even though the
associated reproductive time scales may vary substantially.

\FloatBarrier
{\color{black}
\subsection{Asymptotic Divergence of the Optimal Reproductive Window}

In this section we follow closely the methods and results in \cite{BuescuElaydiOliveira2026}. 
Let \(n=D-A+1\) denote the length of a finite reproductive window,
and let \(x>0\) be its continuous counterpart. With
\(\beta=-\log q\), define the functions 

\[
\Psi(x,q) =
\frac{\beta(1-q^x)}
{\beta x-1+q^x}, \qquad \qquad 
F(x,q)
=
\frac{\displaystyle
	\log\!\left(\frac{1-q^x}{1-q}\right)+A\log q}
{\displaystyle
	A+\frac{q}{1-q}-\frac{xq^x}{1-q^x}}.
\]

The stationarity equation for the continuous truncated entropy is then written  
\[
\Psi(x,q)=F(x,q).
\]
 As \(x\to\infty\), we have 
\(\Psi(x,q)\sim 1/x\), whereas
\begin{equation}
F(x,q)\longrightarrow
F_\infty(A,q)
=
\frac{-\log(1-q)+A\log q}
     {A+q/(1-q)}.
\label{eq:Finfty}
\end{equation}

\begin{proposition}[Critical divergence of the optimal window]
For fixed \(A\) and \(q>q_c(A)\), let \(x^*(q)\) be the unique
continuous maximizer of the truncated geometric entropy. Then
\begin{equation}
x^*(q)
\sim
\frac{q_c}{q-q_c},
\qquad
q\downarrow q_c.
\label{eq:critical-divergence}
\end{equation}
The integer maximizer \(n^*\) has the same leading behaviour.
\end{proposition}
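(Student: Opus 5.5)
The plan is to take the stationarity equation \(\Psi(x,q)=F(x,q)\) at \(x=x^*(q)\) and read it as \(x^*F_\infty(A,q)\to1\). The whole statement then comes down to a first-order expansion of \(F_\infty\) at \(q_c\). First I will show \(F_\infty(A,q_c)=0\): the criticality equation \eqref{eq:critical} gives \(1-q_c=q_c^A\), so \(\log(1-q_c)=A\log q_c\), and the numerator of \eqref{eq:Finfty} vanishes. Next, the numerator \(N(q)=-\log(1-q)+A\log q\) satisfies
\[
N'(q)=\frac1{1-q}+\frac Aq=\frac{q+A(1-q)}{q(1-q)}>0 .
\]
Hence \(F_\infty>0\) for \(q>q_c\). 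Its derivative at \(q_c\) is \(N'(q_c)/T(q_c)\). Since \(T(q_c)=\bigl(A(1-q_c)+q_c\bigr)/(1-q_c)\), the factor \(A(1-q_c)+q_c\) cancels and
\[
\frac{\partial F_\infty}{\partial q}(A,q_c)=\frac1{q_c}.
\]
This gives \(F_\infty(A,q)=(q-q_c)/q_c+O\bigl((q-q_c)^2\bigr)\).

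Second, I need \(x^*(q)\to\infty\) as \(q\downarrow q_c\). I will argue by contradiction. Suppose \(x^*(q_k)\) stays bounded along some sequence \(q_k\downarrow q_c\). Then a subsequence converges to a finite \(\bar x\), and by continuity of \(\Psi\) and \(F\) we get \(\Psi(\bar x,q_c)=F(\bar x,q_c)\). That would be a finite stationary point of the truncated entropy at \(q=q_c\). This contradicts the regime dichotomy of \cite{BuescuElaydiOliveira2026} (Theorems 4.10 and 4.14): at and below \(q_c\) the truncated entropy increases to its asymptote with no interior critical point. A small side argument is needed at \(x\to0^+\), where \(\Psi\) blows up, to exclude accumulation at \(\bar x=0\).

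Third, with \(x=x^*\to\infty\) and \(q\) in a compact neighbourhood of \(q_c\), I will expand both sides uniformly. Writing \(\beta=-\log q\), we have \(\Psi(x,q)=\frac1x\bigl(1+\frac{1}{\beta x}+O(x^{-2})\bigr)\). For \(F\), the terms involving \(q^x\) and \(xq^x\) decay exponentially, so \(F(x,q)=F_\infty(A,q)+O(xq^x)\), with constants uniform because \(\beta\) is bounded away from \(0\). Equating the two sides gives \(1/x^*\,(1+o(1))=F_\infty(A,q)+o(1/x^*)\). Hence \(x^*F_\infty(A,q)\to1\), and combined with the first step this yields \(x^*(q)\sim q_c/(q-q_c)\).

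For the integer maximizer, I will use the unimodality of the truncated entropy in the supercritical regime from \cite{BuescuElaydiOliveira2026}. It places \(n^*\) within distance \(1\) of \(x^*\) (at \(\lfloor x^*\rfloor\) or \(\lceil x^*\rceil\)), so \(n^*/x^*\to1\). I expect the main obstacle to be the second step. Proving that \(x^*\) cannot stay bounded requires that the critical case \(q=q_c\) itself have no finite stationary point, together with uniform control near \(x=0\) and near \(q_c\). If the cited dichotomy does not directly cover the endpoint \(q=q_c\), I would instead show \(\Psi(x,q_c)>F(x,q_c)\) for all \(x>0\) directly, by comparing the two functions at \(q_c\) using \(F_\infty(A,q_c)=0\) and \(\Psi>0\).
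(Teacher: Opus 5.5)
Your proposal is correct and follows the paper's own argument: the first-order expansion $F_\infty(A,q)=(q-q_c)/q_c+O\bigl((q-q_c)^2\bigr)$ via $N'(q_c)/D(q_c)=1/q_c$, combined with $x^*\to\infty$ and $\Psi(x^*,q)\sim 1/x^*$. You supply details the paper merely asserts: the divergence $x^*\to\infty$, the exponentially small size of $F-F_\infty$, and the integer case. Your fallback for the divergence step closes in one line, since $N(q_c)=0$ makes the numerator of $F(x,q_c)$ equal to $\log(1-q_c^{x})<0$ while its denominator is positive, so $F(x,q_c)<0<\Psi(x,q_c)$ for every $x>0$.
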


\begin{proof}
Set
\[
N(q)=-\log(1-q)+A\log q,
\qquad
D(q)=A+\frac{q}{1-q}.
\]
Since \(N(q_c)=0\),
\[
F_\infty(A,q)
=
\frac{N'(q_c)}{D(q_c)}(q-q_c)
+
O\!\left((q-q_c)^2\right).
\]
Using ${\displaystyle \frac{N'(q_c)}{D(q_c)}
	=
	\frac{1}{q_c},}
$ we obtain

\begin{equation}
F_\infty(A,q)
=
\frac{q-q_c}{q_c}
+
O\!\left((q-q_c)^2\right).
\label{eq:Finfty-expansion}
\end{equation}
As \(q\downarrow q_c\),
\(x^*(q)\to\infty\). Taking into account that 
\(\Psi(x^*,q)\sim1/x^*\) yields
\eqref{eq:critical-divergence}.
\end{proof}

 Define the normalized distance from the critical threshold by
${\displaystyle \eta
	=
	\frac{q-q_c(A)}{1-q_c(A)}. }$
Then \eqref{eq:critical-divergence} becomes
\begin{equation}
x^*(q)
\sim
\frac{\Gamma(A)}{\eta},
\qquad
\Gamma(A)
=
\frac{q_c(A)}{1-q_c(A)}.
\label{eq:amplification-factor}
\end{equation}
The same asymptotic relation holds for the integer optimizer \(n^*\).
For the Greenland shark,
$
\Gamma(156)=41.19,
$
so a small normalized distance above the threshold corresponds to a
much larger reproductive-window length.

}

\section{Comparative Context Among Exceptionally Long-Lived Species}
\label{app:comparative_context}

To place the Greenland shark in a broader demographic context, we compare
it with five other exceptionally long-lived species at the same relative
distance above their respective critical thresholds.
Taking as reference the Greenland shark scenario defined by the midpoint
\(q_{\rm GS}=0.987412888\ldots\) of the interval for which
\(B_{95}=392\) years, the central longevity estimate, we define the normalized supercritical parameter
\[
\eta^{\rm GS}=
\frac{q^{\rm GS}-q_c^{\rm GS}}
{1-q_c^{\rm GS}}
=
0.468961\ldots
\]
and, for each species \(i\), the corresponding persistence parameter
\[
q_i=q_{c,i}+\eta^{\rm GS} (1-q_{c,i}).
\]
This provides a common normalized supercritical reference for comparing
the resulting demographic time scales.

\begin{table}[H]
	\centering
	\scriptsize
	\setlength{\tabcolsep}{2.7pt}
	\renewcommand{\arraystretch}{1.08}
	\caption{Entropy-based demographic quantities for six exceptionally
		long-lived species at the common normalized supercritical parameter value \(\eta^{\rm GS}=0.468961\).
		Here \(A\) is the age at first reproduction, \(L\) is the reference
		longevity value used in the comparison,
		\(q_c\) is the critical persistence threshold,
		\(\Delta q=q-q_c\), \(T\) is generation time, \(H(q)\) is evolutionary
		entropy, \(B_p\) are the discrete reproductive quantile ages, and \(D^*\)
		is the entropy-maximizing finite reproductive endpoint. Values of \(q\)
		are displayed to six decimal places; all derived quantities were computed
		before rounding.}
	\label{tab:comparative_extreme}
	\resizebox{\textwidth}{!}{%
		\begin{tabular}{lrrrrrrrrrrrrr}
			\toprule
			Species
			& \(A\)
			& \(L\)
			& \(q_c\)
			& \(q\)
			& \(\Delta q\)
			& \(T\)
			& \(H(q)\)
			& \(B_{50}\)
			& \(B_{75}\)
			& \(B_{90}\)
			& \(B_{95}\)
			& \(B_{99}\)
			& \(D^*\) \\
			\midrule
			
			\textit{Somniosus microcephalus}
			\cite{Nielsen2016GreenlandShark}
			& 156 & 392 & 0.976297 & 0.987413 & 0.011116
			& 234.45 & 0.022900 & 210 & 265 & 337 & 392 & 519 & 299 \\
			
			\textit{Balaena mysticetus}
			\cite{George1999BowheadWhale}
			& 20 & 211 & 0.893895 & 0.943654 & 0.049759
			& 36.75 & 0.104702 & 31 & 43 & 59 & 71 & 99 & 50 \\
			
			\textit{Hoplostethus atlanticus}
			\cite{FrancisHorn1997OrangeRoughy}
			& 30 & 250 & 0.919461 & 0.957231 & 0.037769
			& 52.38 & 0.078850 & 45 & 61 & 82 & 98 & 135 & 70 \\
			
			\textit{Arctica islandica}
			\cite{Butler2013Arctica,RidgwayRichardsonAustad2011,RipleyCaswell2008}
			& 55 & 507 & 0.947748 & 0.972252 & 0.024504
			& 90.04 & 0.050762 & 79 & 104 & 136 & 161 & 218 & 119 \\
			
			\textit{Aldabrachelys gigantea}
			\cite{Bourn1977Aldabra,Gerlach2004GiantTortoises}
			& 25 & 255 & 0.908697 & 0.951515 & 0.042817
			& 44.62 & 0.089678 & 38 & 52 & 71 & 85 & 117 & 60 \\
			
			\textit{Phoebastria immutabilis}
			\cite{Finkelstein2010LaysanAlbatross,USFWS2025Wisdom}
			& 8 & 75 & 0.811652 & 0.899980 & 0.088328
			& 17.00 & 0.191235 & 14 & 21 & 29 & 36 & 51 & 24 \\
			
			\bottomrule
		\end{tabular}%
	}
\end{table}
\FloatBarrier

\textcolor{black}{All six supercritical scenarios satisfy $\Phi<0$ as a consequence
of equation~ \eqref{eq:super}.}

Table~\ref{tab:comparative_extreme} isolates the exceptional demographic
time scale of the Greenland shark. The comparison is particularly revealing
because the reported maximum longevity of \textit{Arctica islandica}
(\(507\) years) exceeds the central age estimate used here for the oldest
Greenland shark (\(392\) years), yet its age at first reproduction and
critical threshold remain far below those of \textit{S. microcephalus}.
At the same normalized position above the critical threshold, the Greenland
shark has a generation time of approximately \(234\) years, compared with
\(90\) years for \textit{A. islandica}, the next largest value in the
comparison. Its \(99\%\) reproductive quantile reaches \(519\) years,
whereas the corresponding value for \textit{A. islandica} is \(218\) years.
Moreover, by construction of the reference scenario, the \(95\%\)
reproductive quantile of the Greenland shark coincides with the central
reported longevity estimate, \(B_{95}=392\) years.

The finite entropy-selected endpoint shows the same distinction. For the
Greenland shark, \(D^*=299\) years, approximately \(76\%\) of the central
longevity estimate used as the reference value in the comparison. For the
remaining species, the corresponding proportion ranges only from
approximately \(23\%\) to \(32\%\). This comparison should not be
interpreted as a fitted demographic reconstruction. Rather, it shows that,
under a common supercritical reference, the Greenland shark alone combines
extremely delayed maturity, a critical threshold exceptionally close to
unity, and reproductive time scales extending over several centuries.


\FloatBarrier

\subsection{The bowhead whale as an empirical benchmark}
\label{subsec:bowhead-benchmark}

The bowhead whale, \textit{Balaena mysticetus}, provides an
independent demographic benchmark because Brandon and Wade
\cite{BrandonWade2006} estimated an age- and sex-structured Leslie
model directly from population data. Their model specifies juvenile
and adult survival, age at maturity, fecundity, and a finite maximum
represented age. Figures~\ref{fig:bowhead-life-history} and
\ref{fig:bowhead-leslie-graph} summarize, respectively, the
corresponding life-history schedules and the structure of the
finite Leslie model.

\FloatBarrier

\begin{figure}[H]
	\centering
	\includegraphics[width=0.9\textwidth]{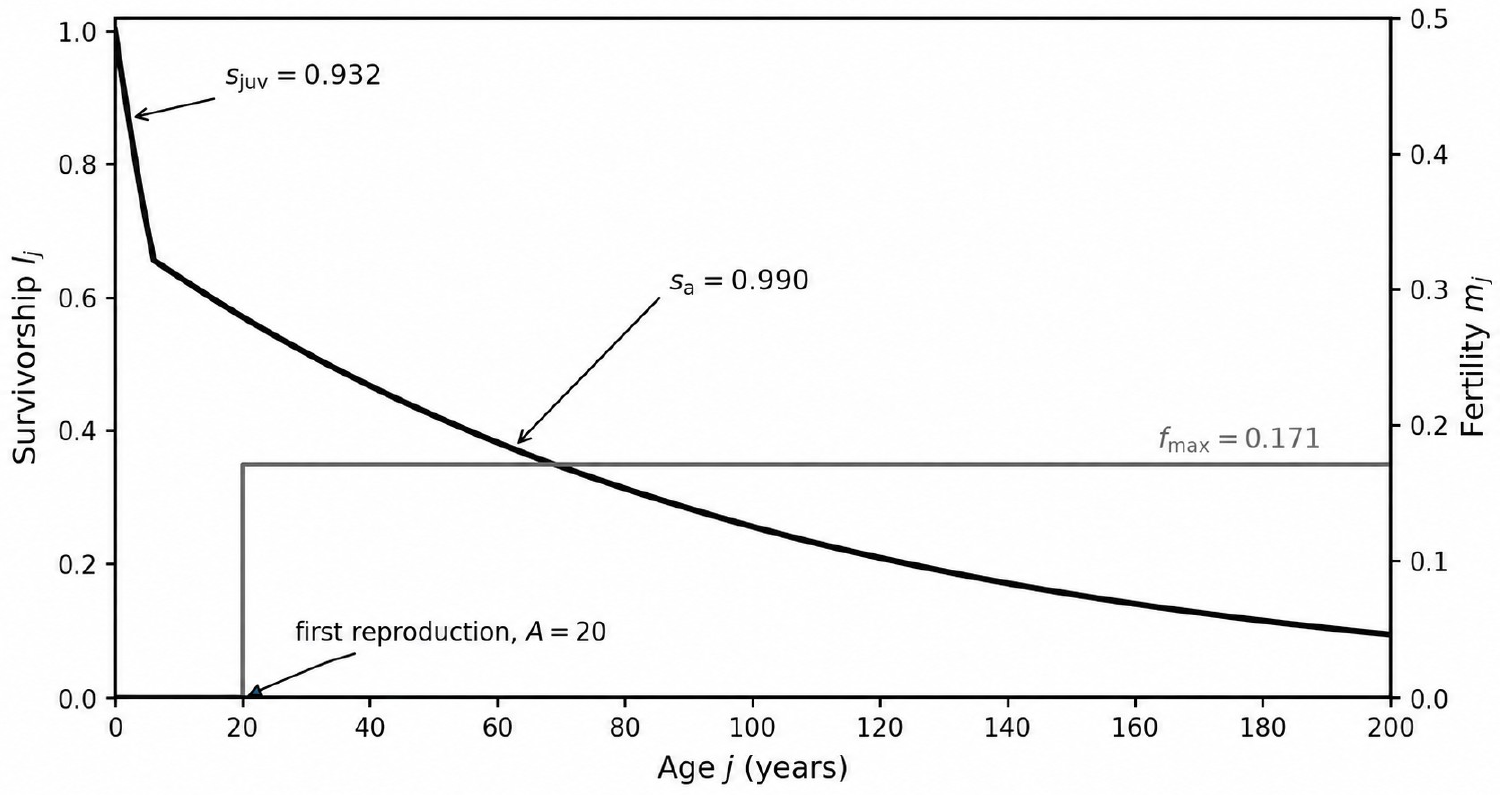}
	\caption{Representative life-history schedules for the
	bowhead-whale benchmark. The survivorship curve $\ell_j$ is
	obtained from the Brandon--Wade model-averaged survival
	estimates $s_{\rm juv}\simeq0.932$ and $s_a\simeq0.990$,
	using the representative central value $a_T=5$. For the
	comparison considered here, fertility is zero before the
	first reproductive age $A=20$ and constant thereafter at
	$f_{\max}\simeq0.171$. The age range extends to
	$a_{\max}=200$, the final age represented in the original
	Brandon--Wade Leslie model.}
	\label{fig:bowhead-life-history}
\end{figure}

Brandon and Wade \cite{BrandonWade2006} parameterize their model by
the juvenile annual survival probability $s_{\rm juv}$, the last age
$a_T$ subject to juvenile survival, the adult annual survival
probability $s_a$, the age-at-maturity parameter $a_m$, the maximum
fertility $f_{\max}$, and the maximum represented age $a_{\max}$.
In their convention, $a_m$ is the last age class with zero
fecundity; hence the first reproductive age in the notation used
throughout the present paper is
\[
A=a_m+1.
\]
Their Bayesian model-averaged median estimates include
\[
s_a\simeq0.990,\qquad
s_{\rm juv}\simeq0.932,\qquad
f_{\max}\simeq0.171,
\]
together with a maximum annual population increase of approximately
$4.3\%$ \cite{BrandonWade2006}. Their analysis supports maximum
population increases between approximately $3\%$ and $5\%$.

We use the standard discrete age-structured notation for Leslie
models \cite{ElaydiCushing2024,Key}. Thus $m_j$ denotes the fertility
contribution of an individual of age $j$, $s_j$ denotes the
conditional probability of surviving from age $j$ to age $j+1$,
and $\lambda$ denotes the dominant eigenvalue of the annual
projection matrix. Let $\ell_j$ denote survivorship from birth to
age $j$, with
\begin{equation}
\ell_0=1,\qquad
\ell_j=\prod_{k=0}^{j-1}s_k,\qquad j\geq1.
\label{eq:bowhead-survivorship-definition}
\end{equation}

\begin{figure}[ht]
\centering

\begin{tikzpicture}[
    scale=0.9,
    transform shape,
    >=Stealth,
    every node/.style={font=\scriptsize},
    state/.style={
        circle,
        draw=black,
        minimum size=5.8mm,
        inner sep=0pt
    }
]


\node[state] (x0)    at (0,0)    {$0$};
\node[state] (x1)    at (1.0,0)  {$1$};
\node        (d1)    at (2.0,0)  {$\cdots$};
\node[state] (xT)    at (3.0,0)  {$a_T$};
\node[state] (xTp1)  at (4.2,0)  {$a_T+1$};
\node        (d2)    at (5.3,0)  {$\cdots$};

\node[state] (xm)    at (6.4,0)  {$a_m$};
\node[state] (xmp1)  at (7.6,0)  {$a_m+1$};
\node[state] (xmp2)  at (8.8,0)  {$a_m+2$};
\node        (d3)    at (10.0,0) {$\cdots$};
\node[state] (xmax)  at (11.2,0) {$a_{\max}$};


\draw[->] (x0) -- node[above] {$s_{\rm juv}$} (x1);
\draw[->] (x1) -- (d1);
\draw[->] (d1) -- (xT);
\draw[->] (xT) -- node[above] {$s_{\rm juv}$} (xTp1);

\draw[->] (xTp1) -- node[above] {$s_a$} (d2);
\draw[->] (d2) -- (xm);
\draw[->] (xm) -- node[above] {$s_a$} (xmp1);
\draw[->] (xmp1) -- (xmp2);
\draw[->] (xmp2) -- (d3);
\draw[->] (d3) -- node[above] {$s_a$} (xmax);


\draw[->]
    (xmp1.south)
    to[out=-108,in=-72,looseness=0.95]
    node[pos=.46,below] {$f_{\max}$}
    (x0.south);

\draw[->]
    (xmp2.south)
    to[out=-100,in=-80,looseness=1.12]
    node[pos=.53,below] {$f_{\max}$}
    (x0.south);

\draw[->]
    (xmax.south)
    to[out=-92,in=-88,looseness=1.36]
    node[pos=.60,below] {$f_{\max}$}
    (x0.south);

\end{tikzpicture}

\caption{Life-cycle graph of the finite age-structured Leslie model
for \textit{Balaena mysticetus} used by Brandon and Wade
\cite{BrandonWade2006}. Juvenile annual survival is $s_{\rm juv}$
through age $a_T$ and adult annual survival is $s_a$ thereafter.
The class $a_m$ is the last age class with zero fecundity, so
fertility begins at age $a_m+1$ and continues through the final
represented age $a_{\max}=200$. The three return arcs represent the
fertility contributions from the first, second, and final
reproductive age classes.}
\label{fig:bowhead-leslie-graph}

\end{figure}

This is the standard life-table survivorship function in discrete
age \cite{KeyfitzCaswell2005}.

Brandon and Wade \cite{BrandonWade2006} denote their maximum finite
annual proportional increase by $r_{\max}$, so that the corresponding
dominant eigenvalue is $1+r_{\max}$. To distinguish this convention
from the Malthusian parameter $r=\log\lambda$ used throughout the
present paper, we write this proportional increase as $g_{\rm BW}$.
Thus
\begin{equation}
\lambda=1+g_{\rm BW},
\qquad
g_{\rm BW}\simeq0.043,
\qquad
\lambda\simeq1.043.
\label{eq:bowhead-lambda}
\end{equation}

For the entropy-based comparison we take the first reproductive age
to be $A=20$, as in Table~\ref{tab:comparative_extreme}. In the
notation of the Brandon--Wade graph this corresponds to the first
fertile class $a_m+1$. Their original Leslie model is truncated at
$a_{\max}=200$ years, after which survival is zero
\cite{BrandonWade2006}; this finite structure is shown in
Figure~\ref{fig:bowhead-leslie-graph}. For the analytical comparison
developed here, we replace this terminal truncation by the open
post-maturity representation of
Section~\ref{Open-Group Representation}, retaining the reported
adult survival $s_a$. Consistently with the Brandon--Wade assumption
of identical fertility for all mature animals, fertility is taken
to be constant after maturity. The resulting fertility and
survivorship schedules are summarized in
Table~\ref{tab:bowhead-vital-schedules}.

\begin{table}[ht]
\centering
\small
\caption{Age-specific fertility $m_j$ and survivorship $\ell_j$ in
the open-terminal extension used in the present analysis, constructed
from the Brandon--Wade bowhead demographic parametrization
\cite{BrandonWade2006}. The notation follows the standard discrete
Leslie and life-table formulation
\cite{ElaydiCushing2024,KeyfitzCaswell2005}.}
\label{tab:bowhead-vital-schedules}
\begin{tabular}{c|c|c|c|c}
\hline
Age $j$
& $0$
& $1\leq j\leq a_T+1$
& $a_T+2\leq j<A$
& $j\geq A$ \\
\hline
$m_j$
& $0$
& $0$
& $0$
& $f_{\max}=0.171$ \\[1mm]
$\ell_j$
& $1$
& $s_{\rm juv}^{\,j}$
& $s_{\rm juv}^{\,a_T+1}s_a^{\,j-a_T-1}$
& $s_{\rm juv}^{\,a_T+1}s_a^{\,j-a_T-1}$ \\
\hline
\end{tabular}
\end{table}

Equivalently, the annual survival schedule is
\begin{equation}
s_j=
\begin{cases}
s_{\rm juv}, & 0\leq j\leq a_T,\\
s_a,         & j>a_T,
\end{cases}
\label{eq:bowhead-survival-schedule}
\end{equation}
which, together with
Eq.~\eqref{eq:bowhead-survivorship-definition}, completely specifies
the open life table. Brandon and Wade allow $a_T$ to range from
$1$ to $9$ years \cite{BrandonWade2006}; for the representative
central life table below we take $a_T=5$.

Since both survival and fertility are constant after maturity, the
reproductive tail has the geometric open-group form considered in
Section~\ref{Open-Group Representation}. Its persistence parameter is
therefore
\begin{equation}
\rho=s_a,\qquad
q=\frac{\rho}{\lambda}
  =\frac{0.990}{1.043}
  \simeq0.94919.
\label{eq:bowhead-q}
\end{equation}
Using the definitions of generation time, evolutionary entropy and
reproductive potential introduced above, the corresponding
macroscopic demographic parameters are
\begin{equation}
T\simeq38.68,\qquad
H\simeq0.10222,\qquad
r=\log\lambda\simeq0.04210,\qquad
\Phi=r-H\simeq-0.06012.
\label{eq:bowhead-macroscopic}
\end{equation}
The critical persistence associated with $A=20$ is
\begin{equation}
q_c(20)\simeq0.89390<q,
\label{eq:bowhead-critical}
\end{equation}
so the independently parametrized bowhead model lies in the
supercritical regime. Its reproductive quantiles are
\begin{equation}
(B_{50},B_{75},B_{90},B_{95},B_{99})
=(33,46,64,77,108)\ {\rm years}.
\label{eq:bowhead-quantiles}
\end{equation}

The survival schedule also permits an independent characterization
of the longevity component of the life history. In standard
life-table notation, life expectancy at birth is the total area
under the survivorship curve; in annual discrete time this becomes
\begin{equation}
e_0=\sum_{j=0}^{\infty}\ell_j
\label{eq:bowhead-life-expectancy}
\end{equation}
\cite{KeyfitzCaswell2005}. Following the entropic formulation of
longevity developed by Demetrius and, in particular, the definitions
given by Demetrius, Sahasranaman and Ziehe
\cite{Demetrius1974,DemetriusSahasranamanZiehe2024}, define the
normalized survivorship distribution
\begin{equation}
p_j^\ast=\frac{\ell_j}{e_0},
\qquad
\sum_{j=0}^{\infty}p_j^\ast=1.
\label{eq:bowhead-normalized-survivorship}
\end{equation}
The discrete counterparts of the life-table entropy and life-span
potential are then
\begin{equation}
H^\ast
=
-\sum_{j=0}^{\infty}
p_j^\ast\log p_j^\ast
=
-\sum_{j=0}^{\infty}
\frac{\ell_j}{e_0}
\log\frac{\ell_j}{e_0},
\label{eq:bowhead-lifetable-entropy}
\end{equation}
and
\begin{equation}
\Phi^\ast
=
-\sum_{j=0}^{\infty}
p_j^\ast\log\ell_j
=
-\sum_{j=0}^{\infty}
\frac{\ell_j}{e_0}\log\ell_j,
\label{eq:bowhead-lifespan-potential}
\end{equation}
respectively
\cite{DemetriusSahasranamanZiehe2024}. They satisfy the exact
identity
\begin{equation}
\log e_0=H^\ast-\Phi^\ast,
\label{eq:bowhead-lifetable-identity}
\end{equation}
which is the survivorship analogue of the decomposition
$r=H+\Phi$
\cite{Demetrius1974,DemetriusSahasranamanZiehe2024}.

Table~\ref{tab:bowhead-benchmark} summarizes the entropy-based and
life-table quantities obtained from the reported model-averaged
demographic estimates. The reproductive ranges show the variation
induced by the Brandon--Wade interval
$g_{\rm BW}=0.03$--$0.05$, holding $A$ and $s_a$ fixed. The
life-table ranges show the variation induced by
$a_T=1,\ldots,9$, holding the model-averaged survival estimates
fixed. They are therefore sensitivity ranges rather than joint
posterior credible intervals.

\begin{table}[ht]
\centering
\caption{Entropy-based and life-table demographic quantities for
\textit{Balaena mysticetus} derived from the Brandon--Wade
model-averaged demographic estimates \cite{BrandonWade2006}.
Reproductive central values use $A=20$, $s_a=0.990$ and
$\lambda=1.043$. The reproductive sensitivity range corresponds to
$g_{\rm BW}=0.03$--$0.05$. For the life-table quantities the central
value uses $a_T=5$, and the sensitivity range spans
$a_T=1,\ldots,9$, with $s_{\rm juv}=0.932$ and $s_a=0.990$.}
\label{tab:bowhead-benchmark}
\begin{tabular}{lcc}
\hline
Quantity & Central value & Sensitivity range \\
\hline
$\lambda$
    & $1.043$    & $1.030$--$1.050$ \\
$q=\rho/\lambda$
    & $0.94919$  & $0.94286$--$0.96117$ \\
$q_c$
    & $0.89390$  & -- \\
$T$ (years)
    & $38.68$    & $36.50$--$44.75$ \\
$H$
    & $0.10222$  & $0.09450$--$0.10502$ \\
$r=\log\lambda$
    & $0.04210$  & $0.02956$--$0.04879$ \\
$\Phi$
    & $-0.06012$ & $-0.06494$--$-0.05623$ \\
$B_{50}$ (years)
    & $33$       & $31$--$37$ \\
$B_{75}$ (years)
    & $46$       & $43$--$54$ \\
$B_{90}$ (years)
    & $64$       & $59$--$78$ \\
$B_{95}$ (years)
    & $77$       & $70$--$95$ \\
$B_{99}$ (years)
    & $108$      & $98$--$136$ \\
$e_0$ (years)
    & $70.61$    & $56.88$--$88.79$ \\
$H^\ast$
    & $5.5845$   & $5.5543$--$5.5982$ \\
$\Phi^\ast$
    & $1.3274$   & $1.1119$--$1.5133$ \\
\hline
\end{tabular}
\end{table}

The independently derived reproductive quantities in
Table~\ref{tab:bowhead-benchmark} are close to those obtained for
the bowhead whale from the normalized comparison in
Table~\ref{tab:comparative_extreme}. More importantly, the published
demographic parametrization permits both the reproductive pair
$(H,\Phi)$ and the survivorship pair $(H^\ast,\Phi^\ast)$ to be
evaluated directly. The bowhead whale therefore provides an
empirical contrast with the Greenland shark, for which the available
observations constrain the temporal organization of reproduction
without determining a complete Leslie matrix.

\section{Conclusion}

The Greenland shark provides an extreme application of the critical-threshold
theory developed in \cite{BuescuElaydiOliveira2026}. With female maturity
estimated at approximately \(156\) years, the corresponding critical
threshold \(q_c\approx0.9763\) lies remarkably close to 1. The
longevity-calibrated scenarios considered here span this critical boundary,
linking the reported longevity range to both asymptotic and finite-window
entropy regimes.

These results also reveal a different form of sensitivity. Because the age
at first reproduction is exceptionally large, the longevity-calibrated
scenarios considered here lie close to, and straddle, the demographic
critical boundary. Small changes in the effective reproductive ratio
\(q=\rho/\lambda\) may therefore move the model between asymptotic and
finite-window entropy regimes and substantially alter the structure of the
entropy-selected reproductive window. The corresponding reproductive
quantiles span the reported longevity interval of the Greenland shark,
while the finite entropy-maximizing endpoint changes sharply and disappears
altogether on the subcritical side.
The asymptotic relation \eqref{eq:critical-divergence} quantifies this
sensitivity. The normalized amplification factor 
\(\Gamma(156)=41.19\) shows that small changes in reproductive persistence
produce much larger changes in the entropy-selected window.

The bowhead whale provides an independent empirical benchmark for this
construction. Its published Leslie demography leads to entropy-based
reproductive quantities close to those obtained from the reduced
open-group comparison, while also permitting direct calculation of
$r$, $\Phi$, and the survivorship quantities $e_0$, $H^\ast$, and
$\Phi^\ast$. The contrast is informative: for the bowhead whale the
demographic matrix is available, whereas for the Greenland shark the
same theory is used in the inverse direction, to constrain the
reproductive organization from the limited demographic information
that is known.

An interesting possible application of our methods should be emphasized. The critical-threshold theory of \cite{BuescuElaydiOliveira2026} converts two empirically accessible quantities, the age at first reproduction and an estimate of longevity, into quantitative constraints on the entire reproductive window. Even when \(\lambda\) and \(\Phi\) are not independently known, evolutionary entropy alone can still yield biologically meaningful constraints on the temporal organization of reproduction. This may open a route to demographic inference for species where direct reconstruction is out of reach and detailed demographic data are absent. 

%
%
%

\section*{Declaration of competing interests}

The authors declare no competing interests.

\section*{Data and code availability}

No new datasets were generated in this study. The empirical values used
in the comparative analysis were obtained from the cited literature.
All numerical calculations are direct evaluations of the formulas given
in the paper and can be readily reproduced. Wolfram Mathematica was used
only for numerical root finding and for producing Figure~\ref{fig:SHARK_PEAK}. No dedicated
code or software package was developed for this study.

\section*{Funding}

The first author acknowledges partial support from Fundação para a
Ciência e a Tecnologia through project \begin{itemize}
\item UID/04561/2025 (DOI: 10.54499/UID/04561/2025).
\end{itemize}

The third author acknowledges partial support from Fundação para a
Ciência e a Tecnologia through CAMGSD -- Centro de Análise Matemática,
Geometria e Sistemas Dinâmicos, under projects
\begin{itemize}
\item UID/04459/2025
(DOI: 10.54499/UID/04459/2025).
\item UID/PRR/04459/2025 
(DOI: 10.54499/UID/PRR/04459/2025).
\end{itemize}

\section*{Author contributions}

All authors contributed equally.

\bibliographystyle{abbrv}
\bibliography{BibloH2}

@article{MacNeilEtAl2012,
  author  = {MacNeil, M. Aaron and McMeans, Bailey C. and
             Hussey, Nigel E. and Vecsei, Paul and Svavarsson, Jorundur and
             Kovacs, Kit M. and Lydersen, Christian and Treble, Margaret A. and
             Skomal, Gregory B. and Ramsey, Michael and Fisk, Aaron T.},
  title   = {Biology of the {Greenland} Shark
             \textit{Somniosus microcephalus}},
  journal = {Journal of Fish Biology},
  year    = {2012},
  volume  = {80},
  number  = {5},
  pages   = {991--1018},
  doi     = {10.1111/j.1095-8649.2012.03257.x}
}

@article{GrantEtAl2018,
  author  = {Grant, Scott M. and Sullivan, Rennie and Hedges, Kevin J.},
  title   = {{Greenland} Shark (\textit{Somniosus microcephalus})
             Feeding Behavior on Static Fishing Gear, Effect of
             {SMART} Hook Deterrent Technology, and Factors Influencing
             Entanglement in Bottom Longlines},
  journal = {PeerJ},
  year    = {2018},
  volume  = {6},
  pages   = {e4751},
  doi     = {10.7717/peerj.4751}
}

@book{Fisher1930,
  author    = {Fisher, Ronald Aylmer},
  title     = {The Genetical Theory of Natural Selection},
  year      = {1930},
  publisher = {Clarendon Press},
  address   = {Oxford}
}

@book{Roff1992,
  author    = {Roff, Derek A.},
  title     = {The Evolution of Life Histories: Theory and Analysis},
  year      = {1992},
  publisher = {Chapman \& Hall},
  address   = {New York}
}

@article{Butler2013Arctica,
  author  = {Butler, P. G. and Wanamaker, A. D. and Scourse, J. D. and
             Richardson, C. A. and Reynolds, D. J.},
  title   = {Variability of marine climate on the {North Icelandic Shelf}
             in a 1357-year proxy archive based on growth increments in
             the bivalve {Arctica islandica}},
  journal = {Palaeogeography, Palaeoclimatology, Palaeoecology},
  volume  = {373},
  pages   = {141--151},
  year    = {2013},
  doi     = {10.1016/j.palaeo.2012.01.016}
}

@book{Charnov1993,
  author    = {Charnov, Eric L.},
  title     = {Life History Invariants: Some Explorations of Symmetry in Evolutionary Ecology},
  publisher = {Oxford University Press},
  address   = {Oxford},
  year      = {1993}
}

@article{Demetrius1974,
  author  = {Demetrius, L.},
  title   = {Demographic Parameters and Natural Selection},
  journal = {Proceedings of the National Academy of Sciences of the United States of America},
  volume  = {71},
  number  = {12},
  pages   = {4645--4647},
  year    = {1974},
  doi     = {10.1073/pnas.71.12.4645}
}

@article{Demetrius1977,
  author  = {Demetrius, L.},
  title   = {Measures of Fitness and Demographic Stability},
  journal = {Proceedings of the National Academy of Sciences of the United States of America},
  volume  = {74},
  number  = {1},
  pages   = {384--386},
  year    = {1977},
  doi     = {10.1073/pnas.74.1.384}
}

@misc{USFWS2025Wisdom,
  author       = {Trott, Matt},
  title        = {{Midway Atoll National Wildlife Refuge:
                   ``A Great Monument to the People Who Passed Here''
                   and a Haven for Native Wildlife}},
  year         = {2026},
  month        = apr,
  day          = {20},
  organization = {{U.S. Fish and Wildlife Service}},
  url          = {https://www.fws.gov/story/2026-04/midway-atoll-national-wildlife-refuge-great-monument-people-who-passed-here},
  urldate      = {2026-08-04},
  note         = {Wisdom, a Laysan albatross, is reported to be
                  approximately 75 years old}
}

@article{Gravel2024,
  author  = {Gravel, Sarah and Bigman, Jennifer S. and Pardo, Sebasti{\'a}n A.
             and Wong, Serena and Dulvy, Nicholas K.},
  title   = {Metabolism, Population Growth, and the Fast--Slow Life History
             Continuum of Marine Fishes},
  journal = {Fish and Fisheries},
  volume  = {25},
  number  = {2},
  pages   = {349--361},
  year    = {2024},
  doi     = {10.1111/faf.12811}
}

@article{Pardo2016,
  author  = {Pardo, Sebasti{\'a}n A. and Kindsvater, Holly K.
             and Reynolds, John D. and Dulvy, Nicholas K.},
  title   = {Maximum Intrinsic Rate of Population Increase in Sharks,
             Rays, and Chimaeras: The Importance of Survival to Maturity},
  journal = {Canadian Journal of Fisheries and Aquatic Sciences},
  volume  = {73},
  number  = {8},
  pages   = {1159--1163},
  year    = {2016},
  doi     = {10.1139/cjfas-2016-0069}
}

@article{A,
  author  = {D. J. Allwright}, 
  title   = {Hypergraphic functions and bifurcations in recurrence relations},
  journal = {Siam Journal on Applied mathematics},
  year    = 1978,
  number  = 4,
  pages   = {687-691},
volume  = 34,
}

@book{And,
  author    = {Aleksander A. Andronov and Aleksandr A. Vitt and Semen E. Khaikin}, 
  title     = {Theory of Oscillators},
  publisher = {Pergammon Press},
  year      = {1959/1963/1966},
  address   = {Oxford, New York},
}

@article{BrandonWade2006,
  author  = {Brandon, J. R. and Wade, P. R.},
  title   = {Assessment of the {Bering--Chukchi--Beaufort Seas} stock of bowhead whales using Bayesian model averaging},
  journal = {Journal of Cetacean Research and Management},
  volume  = {8},
  number  = {3},
  pages   = {225--239},
  year    = {2006},
  doi     = {10.47536/jcrm.v8i3.718}
}

@misc{BuescuElaydiOliveira2026,
  author        = {Buescu, Jorge and Elaydi, Saber N. and Oliveira, Henrique M.},
  title         = {Evolutionary Entropy Shapes Reproductive Lifespan in
                   Age-Structured Populations},
  year          = {2026},
  eprint        = {2606.22001},
  archivePrefix = {arXiv},
  primaryClass  = {q-bio.PE},
  doi           = {10.48550/arXiv.2606.22001},
    note   = {arXiv:2606.22001. DOI: 10.48550/arXiv.2606.22001}
}

@book{Charlesworth,
  author    = {Brian Charlesworth}, 
  title     = {Evolution in Age-Structured Populations},
  publisher = {Cambridge University Press},
  year      = {1994},
  address   = {Cambridge},
  volume={2},  
}

@article{SalgueroGomez2016COMADRE,
  author    = {Salguero-G{\'o}mez, Roberto and Jones, Owen R. and Archer, C. Ruth and Bein, Cyrus and de Buhr, Hauke and Farack, Claudia and Gottschalk, Florian and Hartmann, Anja and Henning, Anne and Hoppe, Gabriel and R{\"o}mer, Gesa and Runge, Jens and Ruoff, Thomas and Sommer, Verena and Wille, Jutta and Voigt, Juliane and Vieregg, Daniel and Buckley, Yohay Carmel and Che-Castaldo, Judy and Hodgson, David and Scheuerlein, Alexander and Caswell, Hal and Vaupel, James W. and Baudisch, Annette},
  title     = {{COMADRE}: A Global Data Base of Animal Demography},
  journal   = {Journal of Animal Ecology},
  year      = {2016},
  volume    = {85},
  number    = {2},
  pages     = {371--384},
  doi       = {10.1111/1365-2656.12482}
}

@article{Demetrius97,
author = {Lloyd  Demetrius}, 
title = {Directionality principles in thermodynamics {and} evolution},
volume = {94}, 
number = {8}, 
pages = {3491-3498}, 
year = {1997}, 
URL = {http://www.pnas.org/content/94/8/3491.abstract}, 
eprint = {http://www.pnas.org/content/94/8/3491.full.pdf}, 
journal = {Proceedings of the National Academy of Sciences} 
}

@article{Demetrius2000,
  author  = {Demetrius, Lloyd},
  title   = {Thermodynamics and Evolution},
  journal = {Journal of Theoretical Biology},
  volume  = {206},
  number  = {1},
  pages   = {1--16},
  year    = {2000},
  doi     = {10.1006/jtbi.2000.2106}
}

@article{Demetrius04,
  author  = {Lloyd  Demetrius and Volker Matthias Gundlach and Gunter Ochs}, 
  title   = {Complexity and demographic stability in population models},
  journal = {Theoretical Population Biology},
  year    =2004 ,
  number  ={3},
  pages   = {211--225},
  volume   =65 ,
}

@article{Demetrius07,
  author  = {Lloyd  Demetrius and Volker Matthias Gundlach and Martin Ziehe}, 
  title   = {Darwinian fitness and the intensity of natural selection: studies in sensitivity analysis},
  journal = {Journal of Theoretical Biology},
  year    =2007 ,
  number  ={4},
  pages   = {641--653},
  volume   =249 ,
}

@article{Demetrius09,
  title={Invasion exponents in biological networks},
  author={Demetrius, Lloyd and Gundlach, Volker Matthias and Ochs, Gunter},
  journal={Physica A: Statistical Mechanics and its Applications},
  volume={388},
  number={5},
  pages={651--672},
  year={2009},
  publisher={Elsevier}
}

@article{DemetriusLegendreHarremoes2009,
  author  = {Demetrius, Lloyd and Legendre, St{\'e}phane and Harrem{\"o}es, Peter},
  title   = {Evolutionary Entropy: A Predictor of Body Size, Metabolic Rate and Maximal Life Span},
  journal = {Bulletin of Mathematical Biology},
  volume  = {71},
  pages   = {800--818},
  year    = {2009},
  doi     = {10.1007/s11538-008-9382-6}
}

@article{DemetriusGundlach2014,
  author  = {Demetrius, Lloyd A. and Gundlach, Volker Matthias},
  title   = {Directionality Theory and the Entropic Principle of Natural Selection},
  journal = {Entropy},
  year    = {2014},
  volume  = {16},
  number  = {10},
  pages   = {5428--5522},
  doi     = {10.3390/e16105428}
}

@article{DemetriusSahasranamanZiehe2024,
  author  = {Demetrius, Lloyd A. and Sahasranaman, Anand and Ziehe, Martin},
  title   = {Directionality theory and mortality patterns across the primate lineage},
  journal = {Biogerontology},
  volume  = {25},
  pages   = {1215--1237},
  year    = {2024},
  doi     = {10.1007/s10522-024-10134-6}
}

@article{Demetrius2023SelfOrganization,
  author        = {Demetrius, Lloyd A.},
  title         = {Self-Organization, Evolutionary Entropy and
                   Directionality Theory},
  year          = {2023},
  eprint        = {2304.14877},
  archivePrefix = {arXiv},
  primaryClass  = {cond-mat.stat-mech},
  doi           = {10.48550/arXiv.2304.14877},
  note   = {arXiv:2304.14877. DOI: 10.48550/arXiv.2304.14877}
}

@book{ElaydiCushing2024,
  author    = {Saber N. Elaydi and Jim Cushing}, 
  title     = {Discrete Mathematical Models in Population Biology
},
  subtitle={Ecological, Epidemic, and Evolutionary Dynamics},
  publisher = {Springer},
  year      = {2024},
  series={Springer Undergraduate Texts in Mathematics and Technology},
  address   = {New York},
    doi        = {10.1007/978-3-031-64795-6},
  isbn       = {978-3-031-64794-9},
  isbn13     = {978-3-031-64795-6}
}

@book{Key,
  title={Applied Mathematical Demography},
  author={Natham Keyfitz},
  year={1977},
   series =    {Springer Texts in Statistics},
  publisher={Springer-Verlag},
  address={New York},
}

@book{KeyfitzCaswell2005,
  author    = {Keyfitz, Nathan and Caswell, Hal},
  title     = {Applied Mathematical Demography},
  edition   = {3},
  publisher = {Springer},
  address   = {New York},
  year      = {2005},
  series    = {Statistics for Biology and Health},
  doi       = {10.1007/b139042}
}

@book{DS,
  author={Wellington de Melo and Sebastian Strien},
  title    = {One-Dimensional Dynamics}, 
  publisher = {Springer},
  address   = {Berlin, Heildelberg},
  year      = 1993,
}

@article{Oliveira2026,
  author  = {Oliveira, Henrique M.},
  title   = {Sensitivity of Evolutionary Entropy in {Lefkovitch} Matrices},
  journal = {arXiv preprint arXiv:2606.26170},
  year    = {2026},
  eprint  = {2606.26170},
  archivePrefix = {arXiv},
  primaryClass  = {q-bio.PE}
}

@article{RipleyCaswell2008,
  author  = {Ripley, Bonnie J. and Caswell, Hal},
  title   = {Contributions of growth, stasis, and reproduction to fitness in brooding and broadcast spawning marine bivalves},
  journal = {Population Ecology},
  year    = {2008},
  volume  = {50},
  number  = {2},
  pages   = {207--214},
  doi     = {10.1007/s10144-008-0075-7}
}

@article{Nielsen2016GreenlandShark,
  author  = {Nielsen, Julius and Hedeholm, Rasmus B. and Heinemeier, Jan and Bushnell, Peter G. and Christiansen, J{\o}rgen S. and Olsen, Jesper and Ramsey, Christopher Bronk and Brill, Richard W. and Simon, Malene and Steffensen, Kirstine F. and Steffensen, John Fleng},
  title   = {{Eye lens radiocarbon reveals centuries of longevity in the Greenland shark ({\it Somniosus microcephalus})}},
  journal = {Science},
  year    = {2016},
  volume  = {353},
  number  = {6300},
  pages   = {702--704},
  doi     = {10.1126/science.aaf1703}
}

@article{George1999BowheadWhale,
  author  = {George, J. C. and Bada, J. and Zeh, J. and Scott, L. and Brown, S. E. and O'Hara, T. and Suydam, R.},
  title   = {{Age and growth estimates of Bowhead whales ({\it Balaena mysticetus}) via aspartic acid racemization}},
  journal = {Canadian Journal of Zoology},
  year    = {1999},
  volume  = {77},
  number  = {4},
  pages   = {571--580},
  doi     = {10.1139/z99-015}
}

@article{FrancisHorn1997OrangeRoughy,
  author  = {Francis, R. I. C. C. and Horn, P. L.},
title = {{Transition zone in otoliths of Orange roughy
(\textit{Hoplostethus atlanticus}) and its relationship
to the onset of maturity}},
  journal = {Marine Biology},
  year    = {1997},
  volume  = {129},
  number  = {4},
  pages   = {681--687},
  doi     = {10.1007/s002270050211}
}

@article{RidgwayRichardsonAustad2011,
  author  = {Ridgway, I. D. and Richardson, C. A. and Austad, S. N.},
  title   = {Maximum shell size, growth rate, and maturation age correlate with longevity in bivalve molluscs},
  journal = {The Journals of Gerontology Series A: Biological Sciences and Medical Sciences},
  year    = {2011},
  volume  = {66A},
  number  = {2},
  pages   = {183--190},
  doi     = {10.1093/gerona/glq172}
}

@article{Bourn1977Aldabra,
  author  = {Bourn, David},
  title   = {Reproductive study of giant tortoises on {A}ldabra},
  journal = {Journal of Zoology},
  year    = {1977},
  volume  = {182},
  number  = {1},
  pages   = {27--38},
  doi     = {10.1111/j.1469-7998.1977.tb04138.x}
}

@book{Gerlach2004GiantTortoises,
  author    = {Gerlach, Justin},
  title     = {Giant Tortoises of the Indian Ocean: The Genus {\it Dipsochelys} Inhabiting the Seychelles Islands and the Extinct Giants of Madagascar and the Mascarenes},
  publisher = {Edition Chimaira},
  address   = {Frankfurt am Main},
  year      = {2004},
  pages     = {207}
}

@article{Finkelstein2010LaysanAlbatross,
  author  = {Finkelstein, Myra E. and Doak, Daniel F. and Nakagawa, Melinda and Sievert, Paul R. and Klavitter, John},
  title   = {Assessment of demographic risk factors and management priorities: impacts on juveniles substantially affect population viability of a long-lived seabird},
  journal = {Animal Conservation},
  year    = {2010},
  volume  = {13},
  number  = {2},
  pages   = {148--156},
  doi     = {10.1111/j.1469-1795.2009.00311.x}
}

@article{S,
  author  = {David Singer}, 
  title   = {Stable orbits and bifurcation of maps of the interval},
  journal = {Applied Mathematics},
  year    = 1978,
  number  = 2,
  pages   = {260--267},
  volume  = 35,
}

@book{Stearns1992,
  author    = {Stearns, Stephen C.},
  title     = {The Evolution of Life Histories},
  publisher = {Oxford University Press},
  address   = {Oxford and New York},
  year      = {1992},
  edition   = {1},
  isbn      = {9780198577416},
  isbn13    = {9780198577416},
  isbn10    = {0198577419},
  pages     = {249},
  language  = {English}
}

@article{SteinerTuljapurkar2020,
  author  = {Steiner, Ulrich K. and Tuljapurkar, Shripad},
  title   = {Drivers of Diversity in Individual Life Courses:
             Sensitivity of the Population Entropy of a {M}arkov Chain},
  journal = {Theoretical Population Biology},
  year    = {2020},
  volume  = {133},
  pages   = {159--167},
  doi     = {10.1016/j.tpb.2020.01.003}
}

@article{Tuljapurkar1982,
  author  = {Tuljapurkar, Shripad D.},
  title   = {Why Use Population Entropy? {I}t Determines the Rate of Convergence},
  journal = {Journal of Mathematical Biology},
  year    = {1982},
  volume  = {13},
  number  = {3},
  pages   = {325--337},
  doi     = {10.1007/BF00276067}
}
\end{document}